\newtheorem*{theo}{Theorem}
\newtheorem{lemma}{Lemma}
\newtheorem*{remark}{Remark}
\begin{document}
%%%
\title{Overloaded Multiuser MISO Transmission with Imperfect CSIT}
\author{\IEEEauthorblockN{Enrico~Piovano,
        Hamdi~Joudeh
        and~Bruno~Clerckx}
        \IEEEauthorblockA{Department of Electrical and Electronic Engineering, Imperial College London,
        	United Kingdom \\
        	 Email: \{e.piovano15, hamdi.joudeh10, b.clerckx\}@imperial.ac.uk}

\thanks{Department of Electrical and Electronic Engineering, Imperial College London,
 United Kingdom, e-mail: \{e.piovano15, hamdi.joudeh10, b.clerckx\}@imperial.ac.uk.}}% <-this % stops a space
%%%%%%%%%%%%%%%%%%%
\maketitle
%%%%%%%%%%%%%%%%%%%%
\begin{abstract}	
A required feature for the next generation of wireless communication networks will be the capability to serve simultaneously a large number of devices with heterogeneous CSIT qualities and demands.
In this paper, we consider the overloaded MISO BC with two groups of CSIT qualities. We propose a transmission scheme where degraded symbols are superimposed on top of spatially-multiplexed symbols.
The developed strategy allows to serve all users in a non-orthogonal manner and the analysis shows an enhanced perfomance compared to existing schemes. Moreover, optimality in a DoF sense is shown.
\footnote{This work has been partially supported by the EPSRC of UK, under grant EP/N015312/1.}
\end{abstract}
%%%%%%%%%%%%%%%%%%%%
\begin{IEEEkeywords}
Overloaded MISO BC, partial CSIT, DoF.
\end{IEEEkeywords}
%%%%%%%%%%
\IEEEpeerreviewmaketitle
%%%%%%%%%
\section{Introduction} \label{Introduction}
%%%%%%%%%%%%%
Exploiting the spatial dimension of the wireless channel through multiuser-multiantenna techniques has become an inevitable necessity to
meet the requirements of future wireless networks.
%%%
It is well established that achieving such spatial-multiplexing gains is highly dependent on the availability of accurate Channel State Information at the Transmitter (CSIT) \cite{Jindal2006,Caire2010}.
Since  highly accurate CSIT is not always guaranteed, initial studies and deployments strived to apply multiantenna schemes that assume perfect CSIT to scenarios with partial CSIT \cite{Clerckx2013}.
However, recent breakthroughs in the study of Degrees of Freedom (DoF) unveiled that such approach is fundamentally flawed as it fails to achieve the information-theoretic limits of the channels \cite{Yang2013,Davoodi2016}.
On the other hand, insights drawn from such fundamental works have proved very promising for the design of future wireless networks \cite{Clerckx2016}.

The ability to simultaneously support a tremendous number of devices with heterogeneous demands and capabilities is amongst the various features envisioned for future wireless networks \cite{Alliance2015}.
Hence, it is expected that many networks will operate in overloaded regimes, roughly described as scenarios where the number of messages exceeds the number of transmitting antennas.
One fundamental example is captured by the Single-Input-Single-Output (SISO) Broadcast Channel (BC), widely studied in literature.
However, insights drawn from such studies are deemed insufficient when considering multiple antennas, as the SISO BC is robust against CSIT inaccuracies due to its degraded nature.
On the other hand, the study of overloaded multiantenna channels is uncommon, e.g. works on the Multiple-Input-Single-Output (MISO) BC with imperfect CSIT consider a number of users less or equal to the number of transmitting antennas \cite{Jindal2006,Caire2010,Yang2013,Davoodi2016}.
\subsection{An overloaded MISO BC with heterogeneous partial CSIT} \label{Introduction_overloaded_MISO_BC}
In this work, we make progress towards understanding the fundamental limits of overloaded multiantenna networks with heterogeneous partial CSIT.
%%%
We consider a MISO BC comprising a transmitter equipped with $M$ antennas, and $K > M$ single-antenna receivers (or users) indexed by $\mathcal{K} = \{1,\ldots,K\}$.
As in \cite{Yang2013,Davoodi2016}, partial instantaneous CSIT is captured by allowing the $k$-th user's CSIT error variance to decay with the Signal to Noise Ratio (SNR) $P$ as $O(P^{-\alpha_{k}})$ for some exponent $\alpha_{k} \in [0,1]$ that represents the CSIT quality.
It is well understood that $\alpha_{k}=0$ and $\alpha_{k}=1$ correspond to no-CSIT and perfect CSIT in a DoF sense, respectively.
While a general  heterogeneous setup assumes arbitrary CSIT qualities, we restrict the analysis to the case where partial CSIT for $M$ of the $K$ users is available ($\alpha_{k} > 0$), while no-CSIT is available for the remaining $K-M$ users ($\alpha_{k} = 0$)\footnote{In this paper, no-CSIT implies that the transmitter has no (or finite precision \cite{Davoodi2016}) information about the channel direction. However, the channel gain (or long term SNR) is known to guarantee reliable communication.}.
We further simplify the analysis by considering a symmetric scenario where all users with partial CSIT have the same quality $\alpha$.
Such setup is sufficient to gain some insight into the structure of the DoF-optimum transmission scheme and the influence of heterogeneous partial CSIT.
Before we proceed, let us denote the groups of receivers by $\mathcal{K}_{\alpha}$ and $\mathcal{K}_{0}$, where the subscript indicates the CSIT quality.
\subsection{Time Partitioning and Power Partitioning}
In the presence of only one of the two groups $\mathcal{K}_{\alpha}$ and $\mathcal{K}_{0}$, DoF-optimum schemes are known.
In particular, the optimum sum-DoF for group $\mathcal{K}_{\alpha}$ is achieved through a Rate-Splitting (RS) scheme, which relies on the transmission of a degraded common symbol on the top of the classical Zero-Forced (ZF) private symbols \cite{Joudeh2016b}.
On the other hand, the absence of CSIT results in a collapse of the sum-DoF to unity \cite{Davoodi2016}, and the degraded layer becomes sufficient to achieve the DoF of group $\mathcal{K}_{0}$.
Hence, it is natural to think about serving each group independently through orthogonal time partitioning (or sharing).
Interestingly, we show that such strategy is in fact suboptimal in a DoF sense by proposing a superior strategy.
We propose a transmission scheme where the signals carrying the messages of groups $\mathcal{K}_{0}$ and $\mathcal{K}_{\alpha}$ are superimposed and separated in the power domain.
Users in $\mathcal{K}_{0}$ decode their symbols by treating the interference caused by the signals intended to $\mathcal{K}_{\alpha}$  as noise.
On the other hand, users in $\mathcal{K}_{\alpha}$ first decode the symbols intended to $\mathcal{K}_{0}$ (without hurting their DoF!), and then proceed to decode their own symbols.
Contrary to the orthogonal time partitioning, this leads to a non-orthogonal power partitioning.
First, we show that such strategy achieves a strict DoF gain over time partitioning when users in each group achieve a symmetric-DoF.
Second, we show that this strategy in fact achieves the optimum DoF region for the considered MISO BC.
Third, we show using simulations that the DoF gains achieved through power partitioning over time partitioning manifest in the finite SNR regime as significant achievable rate gains\footnote{\emph{Notation}: boldface lowercase, standard letters and calligraphic symbols denote column vectors, scalars and sets, respectively. The superscripts $(\cdot)^{T}$ and $(\cdot)^{H}$ denote the transpose and conjugate-transpose respectively. $\|\cdot\|$ and $\perp$ denote the Euclidian norm of a vector and orthogonality, respectively.}.
\section{System Model} \label{system_model}
At the $t$-th channel use of the considered MISO BC, the received signal at the $k$-th receiver is given by
\begin{equation} \label{eq:rxsignal}
y_k(t)=\mathbf{h}^H_k(t)\mathbf{x}(t)+n_k(t)
\end{equation}
where $\mathbf{h}_k^H(t) \in \mathbb{C}^{1 \times M}$  is the  channel vector and
$\mathbf{x}(t) \in \mathbb{C}^{M \times 1}$ is the transmitted signal, which is subject to the power constraint $\mathbb{E}(\| \mathbf{x}(t) \|^2) \leq P $. The term $n_k(t) \sim \mathcal{CN}(0,1)$ is the additive noise at the $k$-th receiver.
As described in Section \ref{Introduction_overloaded_MISO_BC}, the transmitter has access to an imperfect estimate of the instantaneous channel.
Denoting the estimate of the channel for the $k$-th user at the $t$-th channel use by $\hat{\mathbf{h}}_k(t)$, we have ${\mathbf{h}}_k(t)={\hat{\mathbf{h}}}_k(t)+\tilde{\mathbf{h}}_k(t)$, where $\tilde{\mathbf{h}}_k(t)$ is the channel estimation error at the transmitter.
The channel estimate ${\hat{\mathbf{h}}}_k(t)$ and the estimation error
${\tilde{\mathbf{h}}}_k(t)$  are assumed to be uncorrelated, with zero mean and
covariance matrices $(1-\sigma_k^2)\mathbf{I} $ and $\sigma_k^2 \mathbf{I}$, respectively, where $\sigma_k^2 \leq 1$.
For the sake of notational convenience, the channel user index
$t$ is omitted in the rest of the paper.
The CSIT error $\sigma_k^2$ decays with increasing SNR as $O(P^{-\alpha})$ for all $k \in \mathcal{K}_{\alpha}$, and $O(1)$ for all
$k \in \mathcal{K}_{0}$.
Moreover, without loss of generality, we assume that $\mathcal{K}_{\mathrm{\alpha}}=\{1,\ldots,M\}$ and
$\mathcal{K}_0=\{M+1,\ldots,K\}$.
The transmitter has messages $W_1,\ldots, W_K$ intended to the corresponding users.
Codebooks, probability of error, achievable rate tuples $(R_1(P),\ldots,R_K(P))$ and the capacity region $\mathcal{C}(P)$ are all defined in the Shannon theoretic sense.
The DoF tuple $(d_1, \ldots, d_K)$ is said to be achievable if there exists $(R_1(P),\ldots,R_K(P)) \in \mathcal{C}(P)$ such that $d_k=\lim_{P \to \infty} \frac{R_k(P)}{\log(P)}$ for all $k \in \mathcal{K}$.
The DoF region is defined as the closure of all achievable DoF tuples $(d_1, d_2, \dots, d_K)$, and is denoted by $\mathcal{D}$.
\section{A Time Partitioning approach} \label{problem_statement}
Since group $\mathcal{K}_{\alpha}$ has (partial) CSIT and group
$\mathcal{K}_0$ has no-CSIT, it seems natural to partition the time resource and carry out the transmission over two phases.
In particular, the first phase occupies a fraction $b \in [0,1]$ of the time in which group
$\mathcal{K}_{\alpha}$ is served using a multiuser scheme that leverages partial CSIT and achieves spatial-multiplexing gains.
On the other hand, the second phase occupies the remaining $1-b$ fraction of the time in which group $\mathcal{K}_0$ is served
with no multiplexing gains due to to the absence of CSIT.
This time partitioning scheme acts as a baseline for the scheme proposed in the following section.
Moreover, the two phases are in fact used as basic building blocks to construct the proposed scheme.
Next, we describe the two phases in more detail.
\subsubsection{Phase 1}
For the first phase where users with CSIT are served, we adopt the
RS strategy which is particularly suitable for scenarios with partial CSIT \cite{Hao2015,Joudeh2016b}.
In particular users $k \in \mathcal{K}_{\mathrm{\alpha}}$ split their respective messages into $\big(W_k^{(p)},W_k^{(c)}\big)$,
where $W_k^{(p)}$ is a private sub-message and $W_k^{(c)}$ is a common (or public) sub-message.
The sub-message $W_k^{(p)}$ is encoded into the private symbol $x_k^{(p)}$ decoded only by user $k$, while  $W_1^{(c)}, \ldots, W_M^{(c)}$ are jointly encoded into the common symbol $x^{(c)}$ decoded by all users in $\mathcal{K}_{\mathrm{\alpha}}$.
It is assumed that all symbols are drawn from Gaussian codebooks with unitary powers.
All symbols are linearly precoded and power allocated from which the transmitted signal is given by
%%%
\begin{equation} \label{tx_phase_1}
\mathbf{x}=\sqrt{P^{(c)}}\mathbf{v}^{(c)}x^{(c)}+\sum_{k \in \mathcal{K}_{\mathrm{\alpha}}}{\sqrt{P_k^{(p)}}\mathbf{v}_k^{(p)}x_k^{(p)}}
\end{equation}
%%%
where $\mathbf{v}^{(c)} \in \mathbb{C}^{M \times 1}$ and $\mathbf{v}_k^{(p)} \in \mathbb{C}^{M \times 1}$ are unitary precoding vectors, and $P^{(c)}$ and $P_k^{(p)}$ are the corresponding allocated powers with $P^{(c)}+\sum_{k \in \mathcal{K}_{\mathrm{\alpha}}}{P_k^{(p)}} \leq P$.
Since the common symbol is decoded by all users, $\mathbf{v}^{(c)}$ is chosen as a random (or generic) precoding vector.
On the other hand, the private symbols are precoded by ZF over the channel estimate, i.e.
$\mathbf{v}_k^{(p)} \perp \big\{\hat{\mathbf{h}}_l \big\}_{l \in \mathcal{K}_{\mathrm{\alpha}} \setminus k}$.
The power allocation is set such that $P^{(c)} = O(P)$ and $P_k^{(p)} = O(P^{\alpha})$.

All users decode the common symbol by treating the interference from all private symbols as noise, from which the Signal to Interference plus Noise Ratio (SINR) scales as $O(P^{1-\alpha})$.
This is followed by removing the common symbol, and then each receiver decodes its private symbol with SINR of $O(P^{\alpha})$.
Normalized by the time partitioning factor $b$, the DoF achieved by the common symbol is given by $1- \alpha$, while each private symbol achieves a DoF of $\alpha$ \cite{Joudeh2016b}.
Hence, the  per user symmetric normalized DoF achieved by evenly sharing the common symbol is given by $\frac{1+(M-1)\alpha}{M}$.
%%%
\subsubsection{Phase 2}
In the second phase, users $k \in \mathcal{K}_{\mathrm{0}}$ are served.
Since all users have no-CSIT, after normalizing by the time partition $1-b$, the sum-DoF collapses to 1 \cite{Davoodi2016}.
This single normalized DoF can be shared in an orthogonal fashion using time-sharing or in a non-orthogonal fashion using superposition coding and Successive Interference Cancelation (SIC).
From a DoF perspective, these two strategies achieve the same performance.
%%%
Assuming superposition coding, messages are encoded into symbols and then precoded such that
%%%
\begin{equation} \label{tx_phase_2}
\mathbf{x}=\sum_{k \in \mathcal{K}_{\mathrm{0}}}{\sqrt{P_k} \mathbf{v}_k x_k}
\end{equation}
where $x_{k}$ is an encoded symbol, $\mathbf{v}_k$ is a random unitary precoding vector and $P_{k}$ is the power allocation.

%%%
Using an appropriate power allocation, it can be shown that the single normalized DoF can be split evenly amongst users such that each user achieves a normalized DoF of $\frac{1}{K-M}$.
%%%
\subsubsection{Achievable DoF}
%%%
It can be seen that within each phase (or group), power allocation is carried out such that users achieve symmetric normalized DoF.
By incorporating the time partitioning factor $b \in [0,1]$, the actual (non-normalized) DoF achieved by the $k$-th user is given by
\begin{equation} \label{dof_scheduling}
d_k=\begin{cases}
b\frac{1+(M-1)\alpha}{M},  &  k \in \mathcal{K}_{\mathrm{\alpha}}\\
(1-b)\frac{1}{K-M},  &  k \in \mathcal{K}_{\mathrm{0}}.
\end{cases}
\end{equation}
The time partitioning factor can be further optimized to achieve a symmetric-DoF amongst all users in the system, or any other tradeoff depending on the design objective.
%%%
\section{A Power Partitioning Approach} \label{proposed scheme}
%%%
In contrast to the time partitioning approach in the previous section, we propose a scheme based on power partitioning, also known as signal-space partitioning \cite{Yuan2016}.
For some partitioning factor $\beta \in [0,1]$, the bottom $\beta$ power levels are reserved for the transmission to $\mathcal{K}_{\alpha}$ with partial CSIT, while the top $1- \beta$ power levels are occupied by the transmission to $\mathcal{K}_{0}$ with no-CSIT.
It can be seen that power partition $\beta$ in this scheme is reminiscent to the time partition $b$ in the previous scheme.
%%%
Moreover, the transmitted signal is in fact a superposition of the signals in (\ref{tx_phase_1}) and (\ref{tx_phase_2}) such that
\begin{equation} \label{TX_signal_PS}
\begin{split}
\mathbf{x}=&\sqrt{P_0}\sum_{i \in \mathcal{K}_{\mathrm{0}}}{\sqrt{q_i} \mathbf{v}_i x_i}+ \sqrt{P^{(c)}}\mathbf{v}^{(c)}x^{(c)} \\
& +\sum_{k \in \mathcal{K}_{\mathrm{\alpha}}}{\sqrt{P_k^{(p)}}\mathbf{v}_k^{(p)}x_k^{(p)}}
\end{split}
\end{equation}
where symbols, precoding vectors and powers are as defined in the previous section.
To highlight the power partitioning, we introduce $P_0$ which denotes the total power allocated to the signal intended to all users in $\mathcal{K}_0$.
It follows that $q_i=P_i/P_0$ is the normalized power allocated user $i \in \mathcal{K}_0$.
An example that illustrates the two scheme is given in Fig. \ref{fig:figura_1}.
\begin{figure}[]
	\centering
	\includegraphics[width=0.48\textwidth]{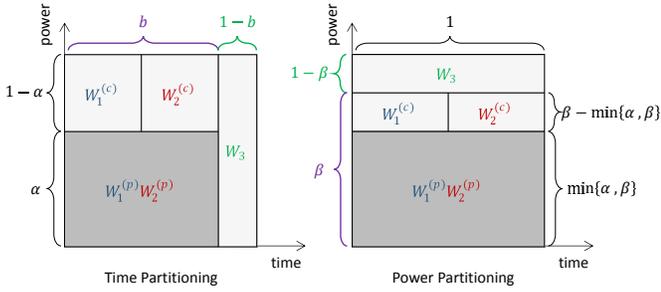}
	\caption{Time partitioning and power partitioning for $M=2$ and $K=3$. Define the normalized spatial-multiplexing as the sum-DoF normalized by both the time partition and power partition. The normalized spatial-multiplexing gain in rectangles with light and dark shadings is $1$ and $2$ respectively.}
	\label{fig:figura_1}
\end{figure}

Before we proceed to take a closer look at the power partitioning, it is useful to highlight that
the signal received by the $k$ users is expressed by
\begin{equation} \label{rx_our_alpha}
\begin{split}
y_k=& \sqrt{P_0}\sum_{i \in \mathcal{K}_{\mathrm{0}}}{{\sqrt{q_i}\mathbf{h}^H_k \mathbf{v}_i x_i}}+
 \sqrt{P^{(c)}} \mathbf{h}_k^H\mathbf{v}^{(c)}x^{(c)} \\ &
+ \sum_{i \in \mathcal{K}_{\mathrm{\alpha}}}{\sqrt{P_i^{(p)}}}\mathbf{h}^H_k \mathbf{v}_i^{(p)}x_i^{(p)} + {n_k},
\end{split}
\end{equation}
in which all different desired and interference components can be seen.
%%%
In order to partition the signal-space through the power domain, the power allocation is carried out such that
\begin{equation}
\begin{cases}
P_0=O(P) \\
P^{(c)} + \sum_{k \in \mathcal{K}_{\alpha}} P_{k}^{(p)} = O(P^{\beta}).
\end{cases}
\end{equation}
\setcounter{subsubsection}{0}
\subsubsection{No-CSIT Receivers}
Users in  $\mathcal{K}_{0}$ decode their messages by treating the interference (consisting of signals intended to users in $\mathcal{K}_{\alpha}$) as noise.
This is equivalent to raising the noise floor to $P^{\beta}$ in Phase 2 of the previous section.
Hence, the sum-DoF achieved by users in $\mathcal{K}_{0}$ is given by $1-\beta$.
Through an appropriate allocation of $\{q_{i}\}_{i\in\mathcal{K}_{0}}$, this DoF can be split evenly amongst users in $\mathcal{K}_{0}$.
\subsubsection{Partial CSIT Receivers}
As for users in $\mathcal{K}_{\alpha}$, the same RS strategy of Phase 1 in the previous section is carried out where the power of $O(P^{\beta})$ is further split between the common symbol and the private symbols. In particular, the common symbol is allocated a power of $O(P^{\beta})$, while private symbols are allocated a power of $O(P^{a})$ where $a \leq \beta$.
%%%
Before decoding their symbols, receivers first decode all symbols intended to users in $\mathcal{K}_{0}$ and remove them from the received signal.
Since such messages are degraded already, the DoF achieved by users in $\mathcal{K}_{0}$  remains uninfluenced by this step.
On the other hand, users in $\mathcal{K}_{\alpha}$ now fully occupy the bottom $\beta$ power levels.

Users in $\mathcal{K}_{\alpha}$ now proceed to decode the common symbol as in Phase 1 of the previous section.
This is received with a SINR of $O(P^{\beta-a})$, hence achieves a DoF of $\beta-a$.
After removing the common symbol, each receiver decodes its private symbol with SINR of $O(P^{a})$, achieving a DoF of $a$.
It remains to highlight that since the channel estimation error scales as $O(P^{-\alpha})$, and due to ZF, each receiver in $\mathcal{K}_{\alpha}$  experiences an interference from the other private symbols that scales as $O(P^{a-\alpha})$.
This is drowned by noise if $a \leq \alpha$.
Knowing that $a \leq \beta$, we may set $a=\min\{\alpha,\beta\}$.
In other words, as long as the partition $\beta$ satisfies  $\beta \leq \alpha$, users in $\mathcal{K}_{\alpha}$ only need to rely on private messages using ZF as interference can be drown by noise and RS is unnecessary.
For partitions with $\beta > \alpha$, ZF is insufficient to neutralize interference, and RS becomes useful for users in $\mathcal{K}_{\alpha}$.
It follows that each private symbol achieves a DoF of $\min\{\alpha,\beta\}$, while the common symbol achieves a DoF of $\beta - \min\{\alpha,\beta\}$.
%%%
\begin{remark}
\nonumber
The proposed scheme is a superposition of non-orthogonal layers and an orthogonal layer. Non-orthogonal layers consist of degraded symbols coming from $\mathcal{K}_{0}$ and RS in $\mathcal{K}_{\alpha}$, decoded by treating the orthogonal-layer as noise, and removed using SIC.
The orthogonal layer consist of spatially-multiplexed symbols carrying the remaining information for $\mathcal{K}_{\alpha}$,  which see no interference due to SIC of non-orthogonal layers and ZF up to the $\alpha$-th power level.
\end{remark}
%%%
\subsubsection{Achievable DoF}
%%%
As in the previous section, we consider the case where users in each group achieve a symmetric-DoF.
It follows that the DoF achieved by the $k$th user is given by
\begin{equation} \label{dof_PS}
d_k=\begin{cases}
\frac{\beta+(M-1)\cdot \min\{\alpha,\beta\}}{M},  &  k \in \mathcal{K}_{\mathrm{\alpha}}\\
(1-\beta)\frac{1}{K-M},  &  k \in \mathcal{K}_{\mathrm{0}}.
\end{cases}
\end{equation}
Moreover, $\beta$ can be optimized to achieve different tradeoffs.
\subsubsection{Gain over time partitioning}
%%%
Here we demonstrate that the power partitioning scheme achieves a DoF gain over the time partitioning scheme.
Let $d_k^{\mathrm{(tp)}}$ be the DoF achieved by the $k$th user through time partitioning as in the previous section, i.e. obtained using (\ref{dof_scheduling}) for some partition $b$.
To highlight the DoF gains, let us consider the symmetric-DoF achieved by users in $\mathcal{K}_{\mathrm{\alpha}}$ through power partitioning given that users in $\mathcal{K}_{\mathrm{0}}$  maintain the same DoF as in time partitioning, i.e. $d_{k} = d_k^{\mathrm{(tp)}}$ for all $k \in \mathcal{K}_{\mathrm{0}}$.
To achieve this, we need to set $\beta=b$ in the power partitioning scheme.
It follows from (\ref{dof_PS}) that the DoF of the remaining users is given by
\begin{equation}
d_k=\begin{cases} \label{eq_pre_1}
\frac{b+(M-1)\alpha}{M}, & \alpha \leq b\\
b, & \alpha > b
\end{cases}
\text{ , for all }k \in \mathcal{K}_{\mathrm{\alpha}}.
\end{equation}
It can be seen that $d_k \geq d_k^{\mathrm{(tp)}}$ for all $k \in \mathcal{K}$.
%%%
For $k \in \mathcal{K}_{0}$, this follows directly from the design criteria.
For the remaining users $k \in \mathcal{K}_{\alpha}$, this follows by noting that for all  $\alpha,b \leq 1$, we have $\frac{b+(M-1)\cdot \min \{\alpha,b\}}{M} \geq \frac{b+(M-1)\cdot \alpha b}{M}$.
Moreover, this inequality is strict whenever $0<\alpha,b < 1$, i.e. partial CSIT for $\mathcal{K}_{\alpha}$ and non-zero (or unity) partitioning.
Under such conditions, power partitioning achieves a strict improvement in the DoF of users in $\mathcal{K}_{\alpha}$ over time partitioning.

To gain more insight into the DoF gain, consider the example shown in Fig. \ref{fig:figura_1}.
It can be seen that the DoF achieved in each rectangle (a time-power resource block) is given by the rectangle's area times the normalized spatial-multiplexing gain (2 for ZF and 1 for degraded).
First, assume that user-3 is switched off. The sum-DoF achieved by the remaining two users through RS is given by $1+\alpha$.
Now, introducing user-3 through time partitioning reduces the sum-DoF to $b(1+\alpha) + (1-b) = 1+b\alpha$.
On the other hand, user-3 is introduced through power partitioning without harming the sum-DoF as long as $\beta = b \geq \alpha$.
Keeping in mind that user-3 achieves the same DoF in both cases, it follows that user-1 and user-2 achieve higher DoF in the latter.
For $\beta = b < \alpha$, introducing user-3 through power partitioning reduces the sum-DoF to $1+b$.
However, this is still higher than the sum-DoF of $1+b\alpha$ achieved through time partitioning.
%%%
\section{Optimum DoF Region} \label{DoF region}
%%%
In the previous section, we considered the case where DoF tuples of the form $(d_{\alpha},\ldots,d_{\alpha},d_{0},\ldots,d_{0})$ are achieved, i.e. users in $\mathcal{K}_{\alpha}$ achieve the symmetric-DoF of $d_{\alpha}$ while users in $\mathcal{K}_{0}$ achieve the symmetric-DoF of $d_{0}$.
This gave some insight into the gains achieved through power partitioning as opposed to time partitioning.
However, in more general scenarios, achievable DoF tuples assume a wide variety of tradeoffs characterized by achievable and optimum DoF regions.
Interestingly, the optimum DoF region for the considered setup is achieved through variants of the power partitioning scheme proposed in the previous section.
This region is characterized in the following result.
\begin{theo} \label{theo_proposed_scheme}
For the overloaded MISO BC described in Sections \ref{Introduction_overloaded_MISO_BC} and \ref{system_model}, the optimum DoF region $\mathcal{D}$ is given by
\begin{equation} \label{dis_1}
  	 d_k \geq 0, \quad \forall k \in \mathcal{K}
  	\end{equation}
  	\begin{equation} \label{dis_2}
  	\sum_{k \in \mathcal{S}}{d_{k}} +  \sum_{k \in \mathcal{K}_{\mathrm{0}}  }{d_k}  \leq 1 + (|\mathcal{S}|-1) \alpha, \quad \forall \mathcal{S} \subseteq  \mathcal{K}_{\mathrm{\alpha}}, |\mathcal{S}| \geq 1.
  	\end{equation}
\end{theo}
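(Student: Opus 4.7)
The proof decomposes into an achievability part and a converse part, and I would attack them separately.

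For achievability, my plan is to show that every extreme point of the polyhedron defined by (\ref{dis_1})--(\ref{dis_2}) is realized by a suitable instance of the power partitioning scheme of Section \ref{proposed scheme}; time-sharing among extreme points then fills the entire region by convexity. Each extreme point corresponds to an ordering of $\mathcal{K}_{\alpha}$ together with a prefix $\mathcal{S}\subseteq\mathcal{K}_{\alpha}$ whose associated inequality in (\ref{dis_2}) is tight. To reach such a point, I would nest rate-splitting layers along the ordering: place the degraded stream for $\mathcal{K}_{0}$ and the RS common symbol for $\mathcal{S}$ at the top of the power hierarchy, and leave ZF private symbols at quality $\alpha$ for the remaining users of $\mathcal{K}_{\alpha}$ at the bottom, exactly in the spirit of the scheme of Section \ref{proposed scheme}. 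The DoF budget of the top layer is then shared among $\mathcal{K}_{0}$ messages and any residual common symbol via the no-CSIT superposition described in Section \ref{problem_statement}.

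For the converse, I would fix an arbitrary non-empty $\mathcal{S}\subseteq\mathcal{K}_{\alpha}$ and derive (\ref{dis_2}) through a two-step genie-aided reduction. First, hand each user in $\mathcal{S}$ the messages $\{W_{j}\}_{j\in\mathcal{K}_{0}}$ as side information, which can only increase achievable rates and permits those users to cancel all $\mathcal{K}_{0}$-intended signals. Second, invoke the aligned-image-sets argument of \cite{Davoodi2016} to collapse the statistically indistinguishable $\mathcal{K}_{0}$ receivers into a single no-CSIT super-decoder of aggregate rate $\sum_{j\in\mathcal{K}_{0}}R_{j}$, without DoF loss. The resulting channel is, in a DoF sense, an $(|\mathcal{S}|+1)$-user MISO BC with $|\mathcal{S}|$ users at CSIT quality $\alpha$ and one no-CSIT user, whose sum-DoF is upper bounded by $1+(|\mathcal{S}|-1)\alpha$. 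Rewriting this bound in terms of the original messages gives precisely (\ref{dis_2}).

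The main obstacle is establishing the sum-DoF bound $1+(|\mathcal{S}|-1)\alpha$ for the reduced mixed-CSIT MISO BC, since neither the symmetric partial-CSIT converse of \cite{Joudeh2016b} nor the zero-CSIT bound of \cite{Davoodi2016} applies verbatim. My plan is to prove it by a hybrid AIS/extremal-inequality argument: isolate a virtual common component decodable at the no-CSIT receiver whose DoF is at most $1-\alpha$ (capped by the $P^{\alpha}$-level residual interference from the quality-$\alpha$ private beams) and $|\mathcal{S}|$ private components of DoF at most $\alpha$ each, so that summing yields $(1-\alpha)+|\mathcal{S}|\alpha=1+(|\mathcal{S}|-1)\alpha$. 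Making this accounting rigorous, by tracking how the $P^{-\alpha}$-scaling of CSIT errors for users in $\mathcal{S}$ interacts with the $O(1)$-scale error of the no-CSIT receiver inside Fano's inequality and the Gaussian extremal inequality, is the key technical hurdle.
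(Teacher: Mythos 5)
Your converse contains the decisive gap. You correctly reduce matters to a sum-DoF bound of $1+(|\mathcal{S}|-1)\alpha$ for a MISO BC with $|\mathcal{S}|$ quality-$\alpha$ users and the no-CSIT users, but you then declare that neither \cite{Joudeh2016b} nor \cite{Davoodi2016} applies and leave the bound to a vaguely described ``hybrid AIS/extremal-inequality argument'' whose accounting ($1-\alpha$ for a virtual common part plus $\alpha$ per private part) is a heuristic, not a proof. This is exactly backwards: \cite[Theorem 1]{Davoodi2016} already gives the sum-DoF of the MISO BC with \emph{arbitrary heterogeneous} CSIT exponents $\alpha_k\in[0,1]$, which covers the mixed case with $\alpha_k=0$ for the no-CSIT users. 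The paper's converse is therefore a two-line channel enhancement (increase the antennas to $K$, boost one user of $\mathcal{S}$ to perfect CSIT, both of which can only enlarge the region) followed by a direct citation; no new aligned-image-sets work is needed, and your intermediate steps (genie-aided side information for $\mathcal{S}$, collapsing the $\mathcal{K}_0$ receivers into a ``statistically indistinguishable'' super-decoder) are both unnecessary and, as justified, not rigorous. As written, your proof of the outer bound is incomplete.

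On achievability your plan is in the right spirit (realize boundary points with the layered power-partitioning scheme, then time-share), but your claim that every extreme point corresponds to an ordering of $\mathcal{K}_\alpha$ with a single tight prefix inequality is false: the region is not a polymatroid. For $M=2$, $K=3$ the point $A=(\alpha,\alpha,1-\alpha)$ is a vertex at which \emph{all three} inequalities $d_1+d_3\le 1$, $d_2+d_3\le 1$, $d_1+d_2+d_3\le 1+\alpha$ are simultaneously tight, and the greedy/ordering construction starting from user $3$ would produce the infeasible tuple $(0,\alpha,1)$. The paper avoids this by proving achievability of every \emph{facet} directly — parameterizing the schemes by the partition $\beta\in[\alpha,1]$, the private-power exponents, and the split of the common symbol's DoF over $\mathcal{S}$ — via an induction on $|\mathcal{K}_\alpha|$ whose base case is the $K=M$ region of Lemma~\ref{prop_M=K}; interior points then follow by time-sharing. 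You would need either to correctly enumerate the vertices of this non-polymatroidal region or to adopt a facet-by-facet argument as the paper does.
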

The achievability of the DoF region is based on generalizing the power partitioning scheme of Section \ref{proposed scheme} by allowing arbitrary power allocations and splits of the common message. On the other hand, the converse is based on the sum-DoF upperbound in \cite{Davoodi2016}.
The complete proof is given in the Appendix.
\begin{figure}[]
	\centering
	\includegraphics[trim=0mm 0mm 0mm 0mm,clip=true,width=0.50\textwidth]{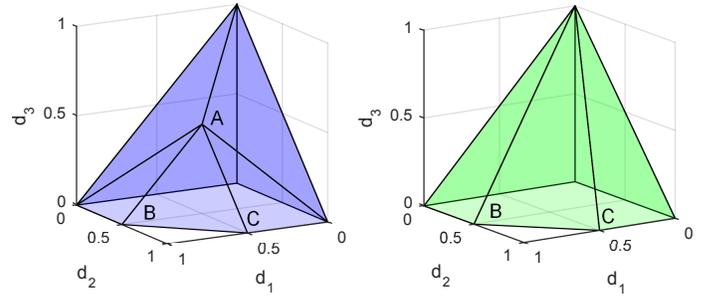}
	\caption{DoF region achieved by power partitioning (left) and time partitioning (right) for $M=2$ and $K=3$, and CSIT quality $\alpha=0.5$ for the first two users. The points are $A=(\alpha,\alpha,1-\alpha)$, $B=(1,\alpha,0)$ and $C=(\alpha,1,0)$. It can be seen that $A$ cannot be achieved through time partitioning.}
	\label{fig:figura_2}
\end{figure}

Note that from (\ref{dis_2}), we have $d_k \leq 1$ for all $k \in \mathcal{K}$ which is a trivial upperbound for the per-user DoF,
and $\sum_{k \in \mathcal{K}_{\mathrm{0}}}{d_k} \leq 1$ which limits the sum-DoF of the no-CSIT users to unity.
To better visualize the optimum DoF region, an example is given in Fig. \ref{fig:figura_2} (left) for a channel with $M=2$ and $K=3$, where the CSIT quality of the first two users is $\alpha=0.5$.
Moreover, for the sake of comparison, the DoF region achieved through time partitioning is shown in Fig. \ref{fig:figura_2} (right).
The time partitioning region is obtained by time-sharing the DoF of 1 achieved by user-3 with the DoF region of the two remaining users achieved through RS (see \cite{Clerckx2016}).
For the power partitioning region, the facet given by $A-B-C$ is in fact sum-DoF optimum. Hence, user-3 can be served with non-zero DoF without influencing the Sum-DoF (e.g. point $A$). On the other hand, serving user-3 with non-zero DoF through time partitioning is not possible without decreasing the sum-DoF as it requires moving away from the segment $B-C$.
\section{Numerical Results} \label{Simulation Results}
In this section, we show that the obtained DoF gains translate into enhanced rate performances.
We consider a MU-MISO scenario with $M=2$ antennas and $K=3$ users.
Uncorrelated channels are assumed with entries drawn from $\mathcal{CN}(0,1)$. 
Users 1 and 2 have CSIT qualities $\alpha$, where channel estimation errors
have entries drawn from
$\mathcal{CN}(0,\sigma^2)$ with $\sigma^2 = P^{-\alpha}$.
On the other hand, the instantaneous CSIT of user 3 is unknown.
We numerically evaluate the ergodic sum rate of the first two users
achieved by power partitioning and time partitioning, while maintaining the ergodic rate of the third user to be the same in both cases.
This is obtained by properly tuning, in the power partitioning approach, the power $P_0$ allocated to the symbol of the third user, while considering a RS strategy for the first two users.
\begin{figure}[]
	\centering
	\includegraphics[width=0.50\textwidth]{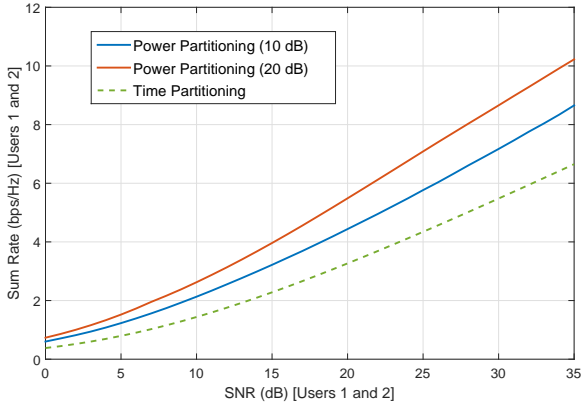}
	\caption{Sum rate of user-1 and user-2 while maintaining the same rate for user-3 for the cases when the long-term
		SNR of user-3 is $10$ dB and $20$ dB lower that user-1,2. The parameters taken are $\alpha=0.5$ and $b=0.5$.}
	\label{fig:Plot_1}
\end{figure}
Fig. \ref{fig:Plot_1} shows the sum rate of user 1 and user 2 with respect to their long-term SNR for both power partitioning and time partitioning. We assume a scenario with $\alpha=0.5$ and we set the parameter $b=0.5$.
We consider two cases where the SNR of users 1 and 2 is taken to be $10$ dB, and then $20$ dB, larger than the SNR of user 3.
Since in time partitioning users 1 and 2 are scheduled separately from user 3, the difference in SNR only affects their sum rate performance in power partitioning. In the legend, we include this difference by brackets (only for power partitioning).
From Fig. \ref{fig:Plot_1}, it is evident that our proposed power partitioning based approach significantly outperforms time partitioning in both cases.
Furthermore, as the difference between the SNR of users 1 and 2 and the SNR of user 3 becomes larger, the rate gain increases which cannot be seen from DoF analysis.
\section{Conclusion} \label{conclusion}
In this paper, we considered an overloaded MISO BC where the transmitter has partial CSI for $M$ users (equal to the number of antennas) and no-CSI for the remaining $K-M$.
We proposed a transmission scheme based on power partitioning and showed that it achieves strict DoF gains
compared to a scheme where the two sets of users are independently served over orthogonal time slots.
Moreover, we showed that  the optimum DoF region for such channel is in fact achieved by generalizing the proposed power partitioning scheme.
The finite SNR rate performance of the proposed DoF-motivated scheme is evaluated through simulations in which significant gains over time partitioning are demonstrated.
This shows that such DoF-motivated design and analysis can be indeed very useful in guiding the design and optimization of more efficient practical transmission strategies.

%\appendices
\section*{Appendix}

\section*{Proof of the optimum DoF region $\mathcal{D}$}
The DoF region $\mathcal{D}$ described by the inequalities in (\ref{dis_1}) and (\ref{dis_2}) is a $K$ dimensional polyhedron.
To prove the optimality of $\mathcal{D}$ we show that it is both achievable, and an outer bound of the optimal region.

\textit{Achievability}:
%%%
In this section we prove the achievability of $\mathcal{D}$.
Before we delve into the general case, we first characterize the achievable DoF region obtained by switching off all users in $\mathcal{K}_0$ (forcing their DoF to zero).
This is equivalent to projecting $\mathcal{D}$ onto the $M$ dimensional subspace characterized by $d_{M+1}, \ldots, d_K=0$.
This region is then utilized as a building block to prove the achievability of $\mathcal{D}$.
\begin{lemma} \label{prop_M=K}
	For a MISO BC with $K=M$ and CSIT quality $\alpha \in [0,1]$ for all users, an achievable DoF region $\mathcal{D}_{M=K}$ is given by
	\begin{equation} \label{dis_1_M=K}
	d_k \geq 0, \quad \forall k \in \mathcal{K}
	\end{equation}
	\begin{equation} \label{dis_2_M=K}
	\sum_{k \in \mathcal{S}}{d_{k}}  \leq 1 + (|\mathcal{S}|-1) \alpha, \quad \forall \mathcal{S} \subseteq  \mathcal{K}, |\mathcal{S}| \geq 1
	\end{equation}
	where $\mathcal{K}$ denotes the set of users $\{1,\ldots,K\}$.
\end{lemma}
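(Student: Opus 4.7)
The plan is to establish achievability of $\mathcal{D}_{M=K}$ by identifying the vertices of the polyhedron, attaining each with a restricted version of the Rate-Splitting scheme from Section~\ref{problem_statement}, and then invoking convexity via time-sharing together with coordinate-wise dominance closure.

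\textbf{Step 1 (Polymatroid structure and vertex enumeration).} I would first observe that the set function $f(\mathcal{S}) = 1 + (|\mathcal{S}|-1)\alpha$, extended by $f(\emptyset) := 0$, is monotone and submodular: its marginal increment upon adding an element equals $1$ when $\mathcal{S} = \emptyset$ and $\alpha \leq 1$ otherwise, hence is non-increasing in $|\mathcal{S}|$. Consequently, (\ref{dis_1_M=K})--(\ref{dis_2_M=K}) describe a polymatroid polytope. By the greedy algorithm, the vertices of its base polytope (the face $\sum_k d_k = 1+(M-1)\alpha$) are the $M$ permutations of $(1,\alpha,\ldots,\alpha)$, obtained by choosing which single user receives DoF $1$.

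\textbf{Step 2 (Achieving each dominant vertex).} Fix a target dominant vertex with $d_{k^\star}=1$ and $d_k = \alpha$ for $k \neq k^\star$. I would apply the RS scheme of Section~\ref{problem_statement} with one adjustment: the common symbol $x^{(c)}$ carries \emph{only} the sub-message $W_{k^\star}^{(c)}$, with the other common splits empty. Setting $P_k^{(p)} = O(P^\alpha)$ and using ZF against the channel estimates keeps the residual inter-private interference at $O(P^\alpha \cdot P^{-\alpha}) = O(1)$, so each private symbol attains DoF $\alpha$ after SIC. The common symbol, transmitted at power $O(P)$ and decoded by every user by treating private symbols as noise, is received with SINR $O(P^{1-\alpha})$ and thus contributes DoF $1-\alpha$, all credited to user $k^\star$. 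Hence $d_{k^\star} = (1-\alpha)+\alpha = 1$ and $d_k = \alpha$ for $k \neq k^\star$, as required.

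\textbf{Step 3 (Time-sharing and down-closure).} Time-sharing among the $M$ schemes from Step 2 achieves every point in the convex hull of the dominant vertices, which is precisely the base polytope. By the standard characterization of polymatroid polytopes, the region (\ref{dis_1_M=K})--(\ref{dis_2_M=K}) is the coordinate-wise down-closure of this base polytope; since any DoF component can trivially be reduced (e.g.\ by power back-off), every point of $\mathcal{D}_{M=K}$ is achievable.

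\textbf{Main obstacle.} The only non-routine step is the polymatroid vertex enumeration of Step 1, which hinges on the submodularity of $f$. Once this is in place, Step 2 is a direct specialization of the scheme already analyzed in Section~\ref{problem_statement}, and Step 3 is a standard invocation of time-sharing together with the down-closure property.
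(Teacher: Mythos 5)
Your proof is correct, but it takes a genuinely different route from the paper's. The paper proceeds by induction on $K$: it enumerates all facets of the polyhedron, shows that the tuples on each facet of type (\ref{dis_2_M=K}) are exactly those with $\sum_{i\in\mathcal{S}}d_i = 1+(|\mathcal{S}|-1)\alpha$, $d_i\geq\alpha$ on $\mathcal{S}$ and $d_i\leq\alpha$ off $\mathcal{S}$, achieves every such tuple directly via a continuously parameterized family of RS schemes (private powers $O(P^{d_i})$ for users outside $\mathcal{S}$ and all possible splits of the common DoF among $\mathcal{S}$), handles the $d_j=0$ facets by the induction hypothesis, and finishes with time-sharing over the boundary. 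You instead observe that $f(\mathcal{S})=1+(|\mathcal{S}|-1)\alpha$ (with $f(\emptyset)=0$) is monotone and submodular, so $\mathcal{D}_{M=K}$ is a polymatroid whose base polytope has only the $M$ permutations of $(1,\alpha,\ldots,\alpha)$ as vertices; you then need to exhibit only $M$ concrete schemes (RS with the entire common DoF credited to a single user), and recover the whole region by time-sharing plus the standard down-closure property of polymatroids and of achievable rate regions. Your argument is shorter, avoids induction, and reduces the scheme-level analysis to finitely many corner points; the paper's argument is more elementary (no appeal to polymatroid/greedy machinery) and, more importantly, its facet-by-facet template transfers directly to the proof of the full region $\mathcal{D}$ in the Theorem, where the extra aggregate coordinate $d_{\Sigma}$ and the restriction $\mathcal{S}\subseteq\mathcal{K}_{\alpha}$ make the region asymmetric, so the clean symmetric-polymatroid vertex enumeration no longer applies verbatim. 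Both arguments rely on the same physical-layer facts (ZF residual interference $O(P^{a-\alpha})$, common-symbol DoF $1-\alpha$), so Step 2 of your proposal is sound.
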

\begin{proof}
	The region $\mathcal{D}_{M=K}$ is a polyhedron given by the intersection of the half-planes delimited
	by the hyperplanes in (\ref{dis_1_M=K}) and (\ref{dis_2_M=K}).
	We show that $\mathcal{D}_{M=K}$ is achievable by induction over the number of users $K$. The hyphothesis is clearly true for $K=1$. We assume that the hyphothesis
	is valid for $K=1,\ldots,k-1$.
	First, each of the hyperplanes in (\ref{dis_1_M=K}) and (\ref{dis_2_M=K}) contains a facet of the polyhedron and the set of all the facets corresponds to the boundary of $\mathcal{D}_{M=K}$.
	We start by considering the hyperplanes in (\ref{dis_2_M=K}) and the corresponding facets.
	For each subset $\mathcal{S} \subseteq \mathcal{K}$ with $|\mathcal{S}| \geq 1$, we need to show that the corresponding facet, given by all the non-negative tuples $(d_1,\ldots,d_k)$ which satisfy
	\begin{equation*}
	\begin{cases}
	\sum_{i \in \mathcal{S}}{d_{i}}  = 1 + (|\mathcal{S}|-1) \alpha\\
	\sum_{i \in \bar{\mathcal{S}}}{d_{i}}  \leq 1 + (|\bar{\mathcal{S}|}-1) \alpha, &  \forall {\bar{\mathcal{S}}} \subseteq {\mathcal{K}}, \; \bar{\mathcal{S}} \neq \mathcal{S}, \; |\bar{\mathcal{S}}| >1
	\end{cases}
	\end{equation*}
	is achievable.

    Considering $j \in \mathcal{K} \setminus {\mathcal{S}}$, the two conditions $\sum_{i \in \mathcal{S}}{d_{i}}  = 1 + (|\mathcal{S}|-1) \alpha$  and $\sum_{i \in \mathcal{S} \cup \{j\}}{d_{i}}  \leq 1 + |\mathcal{S}|\alpha$ must be satisfied, hence $d_j \leq \alpha$.
	On the other hand, for $j \in \mathcal{S}$, we examine two cases. In case of $|\mathcal{S}|=1$, we have $d_j=1$. In case of $\mathcal{S} \geq 2$, the conditions $\sum_{i \in \mathcal{S}}{d_{i}}  = 1 + (|\mathcal{S}|-1) \alpha$ and $\sum_{i \in \mathcal{S} \setminus \{j\}}{d_{i}}  \leq 1 + (|\mathcal{S}|-2) \alpha$ must be simultaneously satisfied, hence $d_j \geq \alpha$.

	From the above analysis, the conditions on the set of tuples $(d_1,\ldots,d_k)$ can be equivalently written
	as
	\begin{equation*}
	\begin{cases}
	\sum_{i \in \mathcal{S}}{d_{i}}  = 1 + (|\mathcal{S}|-1) \alpha\\
	d_i \geq \alpha, & \forall i \in \mathcal{S}\\
	d_i \leq \alpha, & \forall i  \in \mathcal{K} \setminus \mathcal{S}.
	\end{cases}
	\end{equation*}
	Each DoF tuple is achieved through RS by allocating powers
	scaling as $O(P^{\alpha})$ to private symbols of users $i \in \mathcal{S}$, and
	powers scaling as $O(P^{d_{i}})$  to private symbols of users $i \in \mathcal{K} \setminus \mathcal{S}$. The common symbol's DoF is split, in all possible variants, among users $i \in \mathcal{S}$ only.
	
	We consider now the facets contained in the hyperplanes in (\ref{dis_1_M=K}).
	Taking any user $j \in \mathcal{K}$, a facet is given by all non-negative tuples $(d_1,\ldots,d_k)$ which satisfy
	\begin{equation*}
	\begin{cases}
	d_j=0 \\
	\sum_{i \in \mathcal{S}}{d_{i}}  \leq 1 + (|\mathcal{S}|-1) \alpha, & \forall \mathcal{S} \subseteq \mathcal{K} \setminus \{j\}, \; |\mathcal{S}| \geq 1.
	\end{cases}
	\end{equation*}
	This corresponds to the region in (\ref{dis_1_M=K}) and (\ref{dis_2_M=K}) when considering the $k-1$ users $\mathcal{K} \setminus \{j\}$. In this case we have $k$ antennas and $k-1$ users. However, increasing the number of antennas does not harm the achievable DoF region, hence the above region is achievable by induction.
	Since all facets of the polyhedron are achievable,
	all the remaining points can be achieved by time-sharing.
\end{proof}

We can now proceed to show the achievability of the region $\mathcal{D}$.
First, defining $d_{\Sigma}=\sum_{i \in \mathcal{K}_{0}}{d_i}$, the problem is equivalent to showing that all the
non-negative tuples $(d_1,\ldots,d_M,d_{\Sigma})$ that satisfy
\begin{equation} \label{dis_1_proof}
d_i \geq 0, d_{\Sigma} \geq 0 \quad \forall i \in \mathcal{K}_{\alpha}
\end{equation}
\begin{equation} \label{dis_2_proof}
\sum_{i \in \mathcal{S}}{d_{i}} +  d_{\Sigma}  \leq 1 + (|\mathcal{S}|-1) \alpha, \quad \forall \mathcal{S} \subseteq  \mathcal{K}_{\mathrm{\alpha}}, |\mathcal{S}| \geq 1
\end{equation}
are achievable. All tuples $(d_1,\ldots,d_M,d_{M+1},\ldots,d_K)$ are then obtained by splitting, in all
possible variants, the values of $d_{\Sigma}$ among users in $\mathcal{K}_0$.
The proof follows the same steps as before but, in this case, the induction is done over the number of users in $\mathcal{K}_{\mathrm{\alpha}}$, denoted as $K_{\alpha}$ and equal to $M$. The case $K_{\alpha}=1$ is trivial. We assume that the hypothesis holds for
$K_{\alpha}=1,\ldots,k-1$. As before, we show that each facet of the polyhedron is achievable.
Starting with the hyperplanes in (\ref{dis_2_proof}), for each subset $\mathcal{S} \subseteq \mathcal{K}_{\alpha}, |\mathcal{S}| \geq 1$,
we need to show that all the non-negative tuples $(d_1,\ldots,d_k,d_{\Sigma})$ that satisfy
\begin{equation*}
\begin{cases}
\sum_{i \in \mathcal{S}}{d_{i}} + d_{\Sigma} = 1 + (|\mathcal{S}|-1) \alpha\\
\sum_{i \in \bar{\mathcal{S}}}{d_{i}} + d_{\Sigma} \leq 1 + (|\bar{\mathcal{S}|}-1) \alpha, \forall {\bar{\mathcal{S}}} \subseteq {\mathcal{K}}_{\alpha}, \bar{\mathcal{S}} \neq \mathcal{S}, |\bar{\mathcal{S}}| \geq 1
\end{cases}
\end{equation*}
are achievable.
Following the same steps as before, it can be verified that the above conditions are equivalent to
\begin{equation*}
\begin{cases}
\sum_{i \in \mathcal{S}}{d_{i}} + d_{\Sigma} = 1 + (|\mathcal{S}|-1) \alpha\\
d_i \geq \alpha, & \forall i \in \mathcal{S}\\
d_i \leq \alpha, & \forall i  \in \mathcal{K}_{\mathrm{\alpha}} \setminus \mathcal{S}.
\end{cases}
\end{equation*}
Each DoF tuple is achieved through power partitioning by allocating powers
scaling as $O(P^{\alpha})$ to private symbols of users $i \in \mathcal{S}$, and
powers scaling as $O(P^{d_{i}})$  to private symbols of users $i \in \mathcal{K}_{\mathrm{\alpha}} \setminus \mathcal{S}$.
On top, we consider all possible power partitions $\beta \in [\alpha,1]$ and for each partition, the common symbol's DoF is split, in all possible variants, among users $k \in \mathcal{S}$ only, while $d_{\Sigma}=1-\beta$.

Considering the facets contained in the hyperplanes in (\ref{dis_2_proof}),
we have two cases.
The first is given by $d_{\Sigma}=0$ and it reduces to $k$ users with CSIT $\alpha$
and $k$ antennas as in Lemma \ref{prop_M=K}. The second case considers any $j \in \mathcal{K}_{\alpha}$ and we have
\begin{equation*}
\begin{cases}
d_j=0\\
\sum_{i \in \mathcal{S}}{d_{i}} + d_{\Sigma}  \leq 1 + (|\mathcal{S}|-1) \alpha, \forall \mathcal{S}\subseteq \mathcal{K}_{\mathrm{\alpha}} \setminus \{j\}, |\mathcal{S}| \geq 1. \\
\end{cases}
\end{equation*}
This corresponds to the region in (\ref{dis_1_proof}) and (\ref{dis_2_proof}) considering the $k-1$ users in $\mathcal{K}_{\alpha}$. Using the same argument as before, this region is achievable by induction. Moreover, all facets of the polyhedron are achievable, all the remaining points can be achieved by time-sharing

\textit{Converse}:
The converse is based on the sum-DoF upperbound obtained in \cite{Davoodi2016}.
For an arbitrary subset of users $\mathcal{U} \subseteq \mathcal{K}$, the sum-DoF is upperbounded by
\begin{equation}
\label{eq_sum_DoF_UB}
\sum_{k \in \mathcal{U} }{d_k} \leq 1 + \alpha(|\mathcal{S}|-1)^+
\end{equation}
where $\mathcal{S}=\mathcal{U} \cap \mathcal{K}_{\mathrm{\alpha}}$.
We increase the number of transmitter antennas to $K$ and then enhance the quality of one of the users in $\mathcal{S}$ to $1$ (if $\mathcal{S}$ is empty we pick any other user).
Since the previous steps provide an outerbound and cannot harm the DoF, \eqref{eq_sum_DoF_UB} directly follows from \cite[Theorem 1]{Davoodi2016}.
By removing all redundant inequalities, the outerbound coincides with the region $\mathcal{D}$.

\ifCLASSOPTIONcaptionsoff
  \newpage
\fi

\bibliographystyle{IEEEtran}
\bibliography{References}

\end{document}